\newtheorem{lemma}{Lemma}
\newtheorem{proposition}{Proposition}
\newcommand{\qed}{\unskip\hspace*{1em}\hspace{\fill}$\Box$}
\newenvironment{proof}[1][Proof]{\begin{trivlist}
  \item[\hskip \labelsep {\it #1:}]}{%
    \qed\end{trivlist}}
			\newlength{\wordlength}
			\newcommand{\wordbox}[3][c]{\settowidth{\wordlength}{#3}\makebox[\wordlength][#1]{#2}}
\renewcommand{\algorithmicrequire}{\textbf{Input:}}
\renewcommand{\algorithmicensure}{\textbf{Output:}}
\newcommand\eat[1]{}
	\newcommand{\pref}{\succsim\xspace}
	\newcommand{\Pref}[1][]{
		\ifthenelse{\equal{#1}{}}{\mathrel \succsim}{\mathop{R_{#1}}}
	}    
	\newcommand{\sPref}[1][]{                  
		\ifthenelse{\equal{#1}{}}{\mathrel \succ}{\mathop{P_{#1}}}
	}                                          
	\newcommand{\Indiff}[1][]{                 
		\ifthenelse{\equal{#1}{}}{\mathrel \sim}{\mathop{\sim_{#1}}}
	}
	\newcommand{\prefset}[1][]{\ifthenelse{\equal{#1}{}}{\mathcal{R}}{\mathcal{R}_{#1}}}
\newcommand{\midd}{\mathbin{:}}
	\newcommand{\nbh}[1][]{
		\ifthenelse{\equal{#1}{}}{\nu}{\nu(#1)}
	}
	\newcommand{\cstr}[1][]{
		\ifthenelse{\equal{#1}{}}{\pi}{\cstr(#1)}
	}
		\newcommand{\citet}[1]{\citeauthor{#1}~\shortcite{#1}}
		\newcommand{\citep}{\cite}
\newcommand{\w}{e\xspace}
\begin{document}

	\title{The Temporary Exchange Problem}



\author{%
   Haris Aziz \and Edward Lee\\
    UNSW Sydney and Data61\\
    Sydney, Australia\\
       haris.aziz@unsw.edu.au, e.lee@unsw.edu.au
}

\maketitle

\begin{abstract}
We formalize an allocation model under ordinal preferences that is more general than the well-studied Shapley-Scarf housing market. In our model, the agents do not just care which house or resource they get but also care about who gets their own resource. This assumption is especially important when considering temporary exchanges in which each resource is eventually returned to the owner. We show that several positive axiomatic and computational results that hold for housing markets do not extend to the more general setting. We then identify natural restrictions on the preferences of agents for which several positive results do hold. One of our central results is a general class of algorithms that return any allocation that is individually rational and Pareto optimal with respect to the responsive set extension. 
\end{abstract}

\section{Introduction}

The Shapley-Scarf housing market is a well-studied formal model for barter markets where the goods can be dormitory rooms or kidneys~\citep{SoUn10a}. In the market, each agent owns a single good referred to as a house. The goal is to redistribute the houses to the agents in the most desirable fashion. \citet{ShSc74a} showed that under strict preferences, a simple yet elegant mechanism called \emph{Gale's Top Trading Cycle (TTC)}  is polynomial-time, strategyproof and find an allocation that is Pareto optimal and core stable. Even if the preferences are not strict, the algorithm can be suitably generalized while not losing any of the properties~(see e.g., \citep{AlMo11a,AzKe12a,SeSa13a,JaMa12a}). There has also been work where agents have multi-unit demand and endowments~\citep{STSY14a,Papa07c,KQW01a}. In this paper we focus on  single-unit demands. 

In the Shapley-Scarf market, agents only have preferences over houses. This is a reasonable assumption especially when the exchange is irrevocable. However, if the exchange is temporary, and the original house of an agent will be returned to her, the agent may care as to who temporarily used her house. In order to capture this additional issue, we consider the \emph{temporary exchange problem} that is a generalisation of the Shapley-Scarf housing market. In this generalisation, an agent has preferences over outcomes that take into account both what house the agent gets and also who gets her own house. The assumption of the temporary exchange also makes sense when for example a kidney patient not only cares about getting a suitable kidney but also has preference over who should get his or her donor's kidney.
The setting also applies to \emph{reinsurance markets}  in which the identity of the insurers affects the preferences over arrangements.  
For this more general setting, we
want to study fundamental questions as follows: does a core stable allocation exist and what is the complexity of finding it? What is the complexity of finding a Pareto optimal allocation?

\paragraph{Contributions}

We formulate an exchange market setting that is more general than the well-studied Shapley-Scarf market. It models several scenarios where agents are performing a temporary exchange or they care about who gets their resource.  

We first focus on core allocations in such settings and show that the core can be empty and it is NP-hard to check whether a core stable allocation exists. We also prove that finding a Pareto optimal allocation is NP-hard and testing Pareto optimality and weak Pareto optimality is coNP-complete. 

We complement the computational hardness results by presenting succinct ILP and quadratic programming formulations for finding a Pareto optimal allocation. We then consider a weakening of Pareto optimality called Pareto optimality  with respect to the responsive set extension. For this particular concept, we propose a general class of polynomial-time algorithms that return any allocation that is individually rational and Pareto optimal  with respect to the responsive set extension. 

We also consider strategic aspects and present two key impossibility results. Firstly, there exists no core-consistent and strategyproof mechanism. Secondly, there exists no individually rational, Pareto optimal, and strategyproof mechanism.  
We then identify restrictions on the preferences in particular house-predominant and tenant-predominant  preferences under which we regain the positive axiomatic and computational results that hold for the Shapley-Scarf market.

%


    \section{Temporary Exchange Problem}
    

    An instance of \emph{Temporary Exchange Problem} is a tuple $(N,H,e,\pref)$
    
    where 
    \begin{itemize}
    	\item $N=\{1,\ldots, n\}$ is the set of agents.
	\item $H=\{h_1,\ldots, h_n\}$ is the set of houses. 
	\item Endowment function $e:N\mapsto H$ maps each agent to a house. 
	Each agent $i$ owns exactly one house $e(i)$. We will denote $\bigcup_{i\in S} e(i)$ by $e(S)$. 
	\item $\pref=(\pref_1,\ldots, \pref_n)$ is the preference profile that specifies for each agent $i\in N$, the weak order preference relation $\pref_i$ over $N\times H$.\footnote{Note that in the standard housing market, the preferences are simply over the set of houses. Our model allows for more complex preferences.}
	The symbol $\pref_i$ denotes ``prefer at least as much'', $\succ_i$ denotes ``strictly more prefer'', and $\sim_i$ denotes indifference. 
    \end{itemize}

  A feasible outcome for the setting is an allocation of the houses to the agents. 
  An allocation is a one-to-one mapping from $N$ to $H$.
 If $p$ is the allocation, we will denote by $p(i)$ as the house agent $i$ gets. We will denote by $p^{-1}(h)$ the agent who gets house $h$.
  
  Each agent cares about the combination of two things: which house she gets and who gets her own house. We will refer to this combination as the \emph{outcome} for the agent.
  
  For an agent $i$, the outcome $(e(k),j)$ represents the scenario where $i$ gets house $e(k)$ and gives house $e(i)$ to agent $j$. For an agent $i$, the outcome $(e(i),i)$ represents the situation where $i$ keeps her own house. The outcome  $(e(j),j)$ represents the situation where $i$ swaps her house with $j$.
When we write that $(e(j),k)\succ_i (h_{\ell},m)$, it means that $i$ prefers outcome  $(e(j),k)$ to $(h_{\ell},m)$. 

Therefore for any allocations $p$ and $q$, an agent compares them only from the point of view of what house she gets and who gets her house:
$p\pref_i q \iff (p(i), p^{-1}(e(i))) \pref_i (q(i), q^{-1}(e(i))).$
Note that an agent $i$ will be interested in the outcome $(h_j,k)$ only if is it more preferred by her than $(e(i),i)$.  Otherwise agent $i$ would rather not be part of the exchange. 
Note that there could be multiple allocation for which the \emph{outcome} for an agent is the same.


     \bigskip

%

     \section{Properties of allocations and mechanisms}

We consider the standard properties in market design: 
(i) \emph{Pareto optimality}: there should be no allocation in which each agent is at least as happy and at least one agent is strictly happier (ii) \emph{individual rationality (IR)}: no agent should have an incentive to leave the allocation program (iii) \emph{strategyproofness}: no agent should have an incentive to misreport her preferences; and (iv) \emph{core stability}: an allocation should be such that no set of agents can form a coalition where they just exchange among themselves to get a better outcome than the the allocation.  We define these properties as follows.  
\bigskip

	An allocation $p$ is \emph{Pareto optimal} if there exists no other allocation $q$ such that $q\pref_i p$ for all $i\in N$ and $q\succ_i p$ for some $i\in N$. An allocation $p$ is \emph{weakly Pareto optimal} if there exists no other allocation $q$ such that $q\succ_i p$ for all $i\in N$.

\bigskip

An allocation $p$ is \emph{individually rational} if $(p(i),p^{-1}(e(i)))\pref_i (i,e(i))$.

\bigskip

A coalition $S\subseteq N$ \emph{blocks} an allocation $p$ on $N$ if there exists an allocation $q$ on $S$ such that for all $i\in S$, it is the case that $q(i)\in \w(S)$ and $q(i)\succ_i p(i)$. An allocation is \emph{core stable} if it admits no blocking coalition. 

\bigskip
A allocation algorithm is \emph{strategyproof} if no agent can misreport and get a better outcome. 

%
%
%
%
%
%

	\section{Core stability}
	
	We first show that unlike the Shapley-Scarf housing market, the Temporary Exchange market may not admit a core stable allocation.

	\begin{proposition}\label{prop:emptycore}
		The core of a Temporary Exchange setting instance can be empty.
		\end{proposition}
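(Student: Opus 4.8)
\medskip
\noindent\emph{Proof plan.} The plan is to produce a single explicit instance with $n=3$ agents whose preferences embed a cyclic, ``Condorcet-type'' conflict (of the kind that already makes the stable roommates problem unsolvable), and then to check by hand that every one of the $3!=6$ allocations admits a blocking coalition. The reason such an instance can exist here but not in the Shapley--Scarf market is precisely the extra ingredient of our model: an agent also cares \emph{who} receives her own house, so we can arrange that each agent most wants to ``team up'' (swap) with one particular other agent, while that agent in turn most wants to team up with the third --- a cycle that no allocation can resolve.

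\medskip
\noindent\emph{The instance.} Take $N=\{1,2,3\}$, $H=\{h_1,h_2,h_3\}$, $e(i)=h_i$, and recall that an outcome for an agent is a pair (house received, agent who receives her house). Let the agents' strict preferences over the outcomes that can occur be
\begin{gather*}
(h_2,2)\succ_1(h_3,3)\succ_1(h_1,1),\qquad (h_3,3)\succ_2(h_1,1)\succ_2(h_2,2),\\
(h_1,1)\succ_3(h_2,2)\succ_3(h_3,3),
\end{gather*}
with every remaining outcome ranked by agent $i$ below $(h_i,i)$. In words, reading indices cyclically, agent $i$ ranks ``swap with $i+1$'' first, ``swap with $i-1$'' second, ``keep $h_i$'' third, and is worse off in any other configuration.

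\medskip
\noindent\emph{Verifying that the core is empty.} One checks the six allocations in three groups. (i) Under the identity allocation every agent $i$ gets $(h_i,i)$, and the coalition $\{1,2\}$ can swap its two houses, giving agent $1$ the outcome $(h_2,2)$ and agent $2$ the outcome $(h_1,1)$, both strict improvements; so $\{1,2\}$ blocks. (ii) Under either $3$-cycle, a short calculation of the pairs $(p(i),p^{-1}(e(i)))$ shows each agent $i$ ends up with an outcome ranked below $(h_i,i)$, so the singleton $\{i\}$ --- whose only feasible internal allocation is to retain $e(i)$ --- blocks. (iii) In each allocation in which one pair swaps and the third agent retains her house, the swapping agent who has \emph{not} reached her top outcome, together with the retaining agent, can re-swap so that the former gets her top outcome and the latter her second; e.g.\ against ``$2$ and $3$ swap, $1$ keeps $h_1$'' the coalition $\{1,3\}$ swaps, giving agent $3$ her top outcome $(h_1,1)$ and agent $1$ her second $(h_3,3)$. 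The two remaining one-swap allocations are identical after relabelling by the cyclic symmetry of the construction. Hence every allocation is blocked, so the core is empty.

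\medskip
\noindent\emph{Main obstacle.} There is nothing deep beyond the exhaustive check; the points to be careful with are (a) the feasibility condition $q(i)\in e(S)$ on a blocking coalition's internal allocation, which for a two-element coalition forces a genuine swap and for a singleton forces keeping one's own house, so a small coalition really has only one alternative to try, and (b) correctly computing the outcome pair $(p(i),p^{-1}(e(i)))$ in each of the six allocations. The grand coalition need not be considered, since any blocking proper coalition already witnesses non-core-stability.
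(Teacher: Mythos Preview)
Your proof is correct. The three-agent instance with cyclic ``swap'' preferences works, and your case analysis covers all six allocations cleanly; in particular you handle the feasibility constraint on blocking coalitions correctly (singletons must keep their own house, pairs must swap).

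The paper's proof uses the same cyclic idea but with five agents rather than three, and it replaces the exhaustive check with a structural argument: it first shows that any individually rational allocation must decompose into disjoint pairwise swaps (plus possibly some agents staying put), then observes that with an odd number of agents some agent $j$ must be left out, and finally notes that $\{j-1,j\}$ blocks because $j-1$'s top outcome is precisely the swap with $j$. Your construction is the minimal version of exactly this family of instances, and your exhaustive verification is more elementary; the paper's argument, by contrast, avoids enumerating allocations and makes the source of the emptiness (odd cycle versus pairwise-only IR trades) more transparent, which is then reused as a gadget in the subsequent NP-hardness reduction. Either route proves the proposition.
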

	\begin{proof}
		Consider an instance where
		\begin{itemize}
			\item $N=\{0,1,2,3,4\}$
			\item $H=\{h_0,\ldots, h_4\}$
			\item $e(i)=h_i$ for all $i\in \{0,\ldots, 4\}$
			\item The preferences are as follows.
			\begin{align*}
				1: (h_2,2), (h_0,0), (h_1,1)\\
				2: (h_3,3), (h_1,1), (h_2,2)\\
				3: (h_4,4), (h_2,2), (h_3,3)\\
				4: (h_0,0), (h_3,3), (h_4,4)\\
				0: (h_1,1), (h_4,4), (h_0,0)
				\end{align*} 
		\end{itemize}

		Our first claim is that \emph{for an allocation to be individually rational, it must be that an agent $i$ makes no exchange or she makes a pairwise exchange with agent $(i+1 \mod 5)$ or agent $(i-1 \mod 5)$.} An agent $i$ is only interested in outcomes $(h_{(i+1 \mod 5)},(i+1 \mod 5))$ or $(h_{(i-1\mod 5)},(i-1\mod 5))$ or $(h_i,i)$. Suppose $i$ gets a house different than $h_i$.  Then she gets either $h_{(i+1\mod 5)}$ or $h_{(i-1\mod 5)}$.
Suppose that agent $i$ gets house $h_{(i+1\mod 5)}$. Then the outcome is only acceptable to $i$ only if $(i+1\mod 5)$ gets house $h_i$. Similarly, if agent $i$ gets house $h_{(i-1\mod 5)}$, then the outcome is only acceptable to $i$ only if $(i-1\mod 5)$ gets house $h_i$.

		Now consider any individually rational allocation. From the claim established it follows that the allocation involves zero or more pairwise exchanges. Since there are an odd number of agents, at least one agent will not be part of any exchange. Let such as agent be $j$. Note that $j$ is interested to make an pairwise exchange with $j-1$. Agent $j-1$ also prefers outcome $(h_j,j)$ over $(h_{(j-2\mod 5)},(j-2\mod 5))$ or $(h_{(j-1\mod 5)},(j-1\mod 5))$. Hence agents in $\{(j-1\mod 5),j\}$ form a blocking coalition. 
		\end{proof}

       Next, we show that it is NP-hard to check whether a core stable allocation exists. The proof uses the example in the proof of Proposition~\ref{prop:emptycore}.
		    
		    	\begin{proposition}
		    		Checking whether there exists a core stable allocation is NP-hard if there are indifferences in the preferences and even if each agent has at most 6 acceptable outcome pairs. 
		    		\end{proposition}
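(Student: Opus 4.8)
The plan is to establish NP-hardness by a reduction from a known NP-complete problem, and the natural candidate is an exact-cover--type problem such as \textsc{Exact Cover by 3-Sets} or, better suited to the cyclic gadget already in hand, a problem about partitioning a graph into short cycles. Concretely, I would reduce from a restricted version of \textsc{Vertex Cover}, \textsc{3-Dimensional Matching}, or a suitable SAT variant, but the cleanest route is to build the instance out of many disjoint ``blocker gadgets'' like the 5-agent cycle from Proposition~\ref{prop:emptycore}, glued together so that a core-stable allocation exists if and only if the combinatorial instance is a yes-instance.

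First I would recall the key feature of the gadget in Proposition~\ref{prop:emptycore}: a $5$-cycle of agents where the only individually rational moves are pairwise swaps with cyclic neighbours, so parity forces an unmatched agent and hence a blocking pair; the core is empty \emph{in isolation}. The idea of the reduction is to make each such gadget ``repairable'' by an external agent: we attach to each gadget one extra agent (or a small set of them) whose house, if routed into the cycle, breaks the parity obstruction and yields a core-stable allocation locally. Crucially, there should be only a limited supply of these repairing agents --- fewer than the number of gadgets --- and the structure of the source instance (which repairing agent can fix which gadget) should encode the constraints of the NP-complete problem. A core-stable allocation of the whole instance then exists exactly when every gadget can be simultaneously repaired, i.e.\ when the source instance admits a feasible solution. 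The bound of at most $6$ acceptable outcome pairs per agent is precisely what one gets from these constant-size local gadgets, and the indifferences enter because the repairing construction needs an agent to be indifferent between two outcomes (this is exactly where strictness would kill the construction, consistent with the clause ``if there are indifferences'').

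The key steps, in order, are: (i) fix the source problem $X$ and its NP-completeness; (ii) describe the gadget $G_c$ for each ``clause'' / ``set'' $c$ of $X$, reusing the $5$-agent cyclic preferences, and the shared ``variable''/``element'' agents that can be plugged into several gadgets; (iii) verify the \emph{completeness} direction: from a solution of $X$, exhibit an allocation and argue no coalition blocks it --- here one checks gadget by gadget that the repaired cycle has no blocking pair, and that no cross-gadget coalition helps because acceptable outcomes stay local; (iv) verify the \emph{soundness} direction: from any core-stable allocation, extract which shared agent was used to repair each gadget and argue this yields a valid solution of $X$ --- using Proposition~\ref{prop:emptycore}'s analysis to show an unrepaired gadget necessarily contains a blocking pair; (v) confirm the size bounds (outcome pairs $\le 6$, polynomial size, indifferences only where claimed).

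The main obstacle will be step (iv), soundness: one must show that \emph{any} core-stable allocation restricts nicely to the gadgets, so that the repairing agents are used consistently and exactly match a solution of $X$. The danger is ``parasitic'' core-stable allocations that combine partial repairs across gadgets in ways that do not correspond to a clean solution of $X$, or coalitions that span several gadgets. I would control this by making the acceptable-outcome sets of agents local to a single gadget (except for the shared repairing agents, whose acceptable outcomes involve only the gadgets they can legitimately repair), so that blocking coalitions decompose and the global core condition becomes a conjunction of local ones; designing the gadget so that a shared agent can help at most one gadget in any IR allocation (forcing a genuine assignment) is the delicate part of the construction.
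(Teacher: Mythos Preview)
Your plan matches the paper's proof: reduce from \textsc{X3C}, instantiate one copy of the five-agent cycle from Proposition~\ref{prop:emptycore} per element of the ground set, and arrange that a gadget's parity obstruction disappears exactly when its element is covered. The only mechanical difference is that the paper repairs gadget $j$ by \emph{removing} agent $0^j$ rather than adding an external repairing agent---agent $0^j$'s top indifference class consists of the cross-gadget outcomes $(e(0^k),0^i)$ for each triple $\{i,j,k\}\in S$ containing $j$, so in a yes-instance the zero-agents form $3$-cycles according to an exact cover and the four remaining agents in each gadget pair off; this is precisely where both the indifferences and the bound of six acceptable pairs arise (three cross-gadget outcomes tied at the top, plus the three within-gadget outcomes).
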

		    	\begin{proof}
				We reduce from the following NP-complete problem.

				\noindent
				\textbf{Name}: {\sc ExactCoverBy3Sets (X3C)}: \\
				\noindent
				\textbf{Instance}: A pair $(R,S)$, where $R$ is a set and $S$ is a collection of subsets of 
				$R$ such that $|R|= 3m$ for some positive integer $m$ and $|s| = 3$ for each 
				$s\in S$ and each element in $R$ appears three times in $S$. \\
				\noindent
				\textbf{Question}: Is there a sub-collection $S'\subseteq S$ that is a partition of $R$? \\

				For each integer $j\in R$, we have a corresponding  gadget instance $j$ where
		  $N^j=\{0^j,1^j,2^j,3^j,4^j\}$;
	 $H^j=\{h_0^j,\ldots, h_4^j\}$; $e(i^j)=h_i^j$ for all $i\in \{0,\ldots, 4\}$
The preferences are as follows.
		    			\begin{align*}
		    				1^j:&\quad (h_2^j,2^j), (h_0^j,0^j), (h_1^j,1^j)\\
		    				2^j:&\quad (h_3^j,3^j), (h_1^j,1^j), (h_2^j,2^j)\\
		    				3^j:&\quad (h_4^j,4^j), (h_2^j,2^j), (h_3^j,3^j)\\
		    				4^j:&\quad (h_0^j,0^j), (h_3^j,3^j), (h_4^j,4^j)\\
		    				0^j:&\quad \{(e(0^k),0^i)\midd \{i,j,k\}\in S\}, (h_1^j,1^j), (h_4^j,4^j) ,(h_0^j,0^j)
		    				\end{align*}

		
		Note that agent $0^j$ is interested to perform exchanges with other agents $0^i$ and $0^k$ from other gadgets.


		The overall allocation instance involves agents and houses from all the gadgets so that $N=\bigcup N^j$ and $H=H^j$.
		
		\bigskip
		We claim that there exists a core stable allocation if and only if we have a yes instance of X3C.
		If we have a no instance of X3C, not every $0^j$ agent can form an exchange with other zero type agents so we have core deviation within  gadget $j$ such as had in the proof of Proposition~\ref{prop:emptycore}.
		Suppose we have a yes instance of X3C. In that case there is a partition of agents in $\{0^j\}$ who all get one of their most preferred outcomes. 
\end{proof}

		     \section{Pareto optimality}

		     We now turn to the problem of finding Pareto optimal allocations. 
		     
		     \subsection{Complexity of Pareto optimality}
		     
		     Since a Pareto optimal allocation is guaranteed to exist, we focus on \emph{computing} such an allocation.

		     		    	\begin{proposition}

						Checking whether there exists an allocation that is most preferred for each agent is NP-complete if we allow indifferences in the preferences and even if each agent has at most 4 acceptable outcome pairs.\footnote{Note that the problem is trivial if each agent has a unique most preferred outcome.}
		     		    		\end{proposition}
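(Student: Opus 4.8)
First I would record that the problem lies in NP: given a candidate allocation $p$ (a one-to-one map $N\to H$), for each agent $i$ one computes the top indifference class of $\pref_i$ and checks that the outcome $(p(i),p^{-1}(e(i)))$ lies in it, all in polynomial time. (The footnote's remark that the problem is trivial for singleton top classes already signals that any hardness proof must manufacture large top classes, i.e.\ use indifferences.)

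For hardness the plan is to reduce from {\sc ExactCoverBy3Sets}, as in the proof of the previous proposition, using that in its restricted form each element of $R$ occurs in exactly three sets of $S$. Given $(R,S)$, I would build an instance with one \emph{element agent} $e_r$ per $r\in R$ and, after fixing an arbitrary order $s=(a,b,c)$ for each $s\in S$, three \emph{set agents} $s_a,s_b,s_c$; each agent owns her own house. The intended reading is: ``$s$ is in the cover'' corresponds to the six agents $e_a,e_b,e_c,s_a,s_b,s_c$ forming a single directed $6$-cycle in the permutation underlying the allocation, in the cyclic order $(e_a,s_a,e_b,s_b,e_c,s_c)$; ``$s$ is not in the cover'' corresponds to $s_a,s_b,s_c$ forming a $3$-cycle among themselves while $e_a,e_b,e_c$ lie on the $6$-cycles of the sets that do cover them. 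Since an agent $i$'s outcome under an allocation is (the house $i$ receives, the agent who receives $e(i)$), i.e.\ (successor, predecessor) of $i$ along the permutation, each of these two scenarios assigns a concrete outcome to every participating agent. I would then define preferences so that the top indifference class of a set agent $s_x$ is exactly its two scenario outcomes, ranked strictly above its autarky outcome $(e(s_x),s_x)$, with everything else unacceptable; and the top class of an element agent $e_r$ is exactly its (at most three) ``$s$-in-cover'' outcomes over $s\ni r$, ranked strictly above autarky, everything else unacceptable. Then every set agent has three acceptable outcomes, every element agent at most four, and indifferences occur only within top classes.

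It remains to verify the reduction. The forward direction is routine: from an exact cover $S'$, put each $s\in S'$ in $6$-cycle mode and each $s\notin S'$ in $3$-cycle mode; because $S'$ partitions $R$, the $6$-cycles partition the element agents, so this is a genuine allocation, and every agent is handed a top outcome. For the converse, in any allocation in which every agent gets a top outcome, each element agent's realized outcome forces both its permutation-neighbours to be set agents of one common set $s\ni r$, so $e_r$ sits on that set's $6$-cycle and $s$ is ``in the cover''; then a short propagation around a set gadget shows that if one of its agents is in $6$-cycle mode then all six are, so each ``in-cover'' set contributes precisely one $6$-cycle covering its three elements, and since each element agent lies on exactly one cycle the ``in-cover'' sets form an exact cover of $R$. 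The heart of the argument---and what forces the gadget to be designed with care---is exactly this propagation step: the candidate outcomes of a set gadget must be arranged so that no agent can be made top by combining a $6$-cycle neighbour on one side with a $3$-cycle neighbour on the other (no mixed ``Frankenstein'' state), which I would secure by making the house coordinates of a set agent's two scenario outcomes distinct (and likewise across an element agent's up-to-three scenario outcomes); pulling this off while keeping each agent to at most four acceptable outcomes is what ties the construction to the three-occurrences restriction of exact cover.
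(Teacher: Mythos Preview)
Your plan is correct, and in fact more careful than the paper's own argument. The paper's reduction is far simpler: it takes $N=R$ (no set agents at all), and declares agent $j$'s top class to be $\{(e(k),i):\{i,j,k\}\in S\}$, strictly above autarky $(h_j,j)$. The intended reading is that a ``perfect'' allocation decomposes into $3$-cycles, one per set in an exact cover.

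What your more elaborate construction buys is a sound backward direction. In the paper's version, an allocation in which every agent is at a top outcome only forces that \emph{every three consecutive agents along each permutation cycle form a set of $S$}; it does not force cycle lengths to be multiples of three. Concretely, take $R=\{1,\dots,9\}$ with $S$ the four triples inside $\{1,2,3,4\}$ together with the five cyclically consecutive triples inside $\{5,6,7,8,9\}$: every element lies in exactly three sets, there is no exact cover (neither block has size divisible by three), yet the permutation $(1\,2\,3\,4)(5\,6\,7\,8\,9)$ gives every agent a top outcome. (Reading the paper's top class as one orientation per set would bring the count down to four acceptable outcomes, but the same instance still defeats the backward direction for a suitable choice of orientations.) Your introduction of dedicated set agents, each with exactly two top outcomes whose house and tenant coordinates are both distinct, is precisely what rules out such ``Frankenstein'' cycles: once an element agent $e_r$ commits to a set $s$, the adjacent set agent is forced into $6$-cycle mode, and the propagation closes the $6$-cycle on $s$. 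So your gadget earns its complexity; the paper's shortcut does not, as written, establish the claimed equivalence.
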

		     		    	\begin{proof}
		     				We reduce from the following NP-complete problem. \bigskip

		     				\noindent
		     				\textbf{Name}: {\sc ExactCoverBy3Sets (X3C)}: \\
		     				\noindent
		     				\textbf{Instance}: A pair $(R,S)$, where $R$ is a set and $S$ is a collection of subsets of 
		     				$R$ such that $|R|= 3m$ for some positive integer $m$ and $|s| = 3$ for each 
		     				$s\in S$ and each element in $R$ appears three times in $S$. \\
		     				\noindent
		     				\textbf{Question}: Is there a sub-collection $S'\subseteq S$ that is a partition of $R$? \\
				
		Consider the setting instance in which 
	$N=R$; $H=\{h_0,\ldots, h_{|R|}\}$; $e(i)=h_i$ for all $i\in N$. The preferences are as follows.
		     		    			\begin{align*}
   		    				j:&\quad \{(e(k),i)\midd \{i,j,k\}\in S\}, (j,h_j)
		\end{align*}

 Then there exists an allocation in which each agent gets a most preferred allocation if and only if there is a yes instance of X3C.
		     		    		\end{proof}
		     
		     \begin{proposition}
		   
		    Finding a Pareto optimal allocation is NP-hard if there are indifferences in the preferences even if each agent has at most 4 acceptable outcome pairs. 
		   \end{proposition}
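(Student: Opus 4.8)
The plan is a polynomial-time Turing reduction from the problem shown NP-complete in the previous proposition — deciding whether there is an allocation that is most preferred for every agent — to the problem of computing a Pareto optimal allocation. I would reuse verbatim the instance $(N,H,e,\pref)$ that the preceding proof builds from an X3C instance $(R,S)$: $N=R$, $H=\{h_0,\ldots,h_{|R|}\}$, $e(i)=h_i$, and each agent $j$ indifferent among the exchange outcomes $\{(e(k),i)\midd\{i,j,k\}\in S\}$ (there are exactly three, since every element of $R$ occurs in three sets of $S$), each of them strictly preferred to keeping her own house $(h_j,j)$, with nothing else acceptable; note that each agent then has at most $4$ acceptable outcome pairs, as required by the statement.

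The first step is to record two easy facts about this instance: (a) the most preferred outcomes of agent $j$ are precisely her three exchange outcomes, a set that is nonempty and disjoint from $\{(h_j,j)\}$; and (b) given any allocation $p$, one can decide in polynomial time whether $p$ is most preferred for every agent, simply by checking for each $j$ whether $(p(j),p^{-1}(e(j)))$ is one of $j$'s exchange outcomes.

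The only step with real content is the dichotomy. If $(R,S)$ is a yes instance, then by the previous proposition there is an allocation $q$ in which every agent obtains a most preferred outcome; such a $q$ weakly Pareto-dominates every allocation and strictly Pareto-dominates any allocation that leaves some agent below her top, so every Pareto optimal allocation must itself be most preferred for every agent. If $(R,S)$ is a no instance, then no allocation — in particular no Pareto optimal one — is most preferred for every agent. The reduction is therefore: run the assumed polynomial-time algorithm to obtain a Pareto optimal allocation $p$ (one always exists, so this is well defined), then output ``yes'' iff the test in (b) succeeds on $p$. Correctness is immediate from the dichotomy, so such an algorithm would decide X3C in polynomial time, establishing the claimed NP-hardness; the preservation of indifferences and of the bound of $4$ acceptable outcome pairs is inherited from the previous construction.

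I do not expect a genuine obstacle; the heavy lifting is done in the previous proposition. The one observation to get right is that in this instance an allocation that is most preferred for every agent weakly Pareto-dominates every allocation and strictly Pareto-dominates any allocation leaving some agent below her top; hence in a yes instance every Pareto optimal allocation is top-for-all while in a no instance none is, which is exactly what makes the polynomial-time test in (b), applied to the output of the assumed algorithm, a correct decision procedure for X3C.
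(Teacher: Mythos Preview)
Your proposal is correct and follows exactly the paper's approach: the paper's proof is the one-line observation that a polynomial-time Pareto-optimal allocation finder would, on the instance from the preceding proposition, have to output a top-for-all allocation whenever one exists, thereby deciding the NP-complete problem. You have simply spelled out the Turing reduction and the dichotomy in full detail, which the paper leaves implicit.
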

		   \begin{proof}If there exists a polynomial-time algorithm to find a Pareto optimal allocation, then it will return an allocation that is most preferred for each agent if such an allocation exists. 
			   \end{proof}
		        

	            	\begin{proposition}
	            		Checking whether a given allocation is weakly Pareto optimal is coNP-complete even if preferences are strict and even if each agent has at most 4 acceptable outcome pairs. 
	            		\end{proposition}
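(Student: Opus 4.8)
The plan is to establish coNP-completeness in two parts. For coNP-membership, observe that a succinct certificate that a given allocation $p$ is \emph{not} weakly Pareto optimal is simply another allocation $q$: one checks in polynomial time that $q \succ_i p$ for every agent $i$ by comparing the outcome pairs $(q(i), q^{-1}(e(i)))$ and $(p(i), p^{-1}(e(i)))$ in each agent's (polynomial-size, since at most $4$ acceptable pairs) preference list. So the complement problem is in NP, hence the problem is in coNP.

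The substantive direction is coNP-hardness, for which I would reduce from a suitable NP-complete problem — the natural candidate is again X3C (or an exact-cover / satisfiability variant), mirroring the constructions already used in the preceding propositions. Given an X3C instance $(R,S)$, I would build a Temporary Exchange instance together with a \emph{designated} allocation $p_0$ (most naturally the autarkic allocation $p_0(i) = e(i)$, in which every agent keeps her own house) such that $p_0$ is \emph{not} weakly Pareto optimal if and only if the X3C instance is a yes-instance. Concretely, I want to arrange the preferences so that the only way to make \emph{every} agent strictly better off than under $p_0$ is to route the exchanges along cycles corresponding to a sub-collection $S' \subseteq S$ that exactly covers $R$; the $0^j$-type gadget agents from the earlier proof, whose acceptable trades encode membership in sets of $S$, are exactly the mechanism for this. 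The key design constraint is keeping each agent's list of acceptable outcome pairs down to $4$ while still forcing a global exact cover, and ensuring the construction works with \emph{strict} preferences (no indifferences), which is what makes this proposition stronger than the earlier NP-hardness results — so I cannot reuse ties to create slack.

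The main obstacle I anticipate is the "strict preferences'' requirement interacting with the "every agent strictly improves'' clause of weak Pareto optimality: unlike ordinary Pareto optimality, weak Pareto optimality demands a \emph{simultaneous} strict improvement for all $n$ agents, so every gadget agent must have an available strictly-better trade in any witnessing allocation $q$. This means I cannot afford "filler'' agents who are indifferent or stuck; each such agent needs a cheap, always-available strict improvement that does not interfere with the exact-cover logic. I would handle this by pairing up or cycling the non-$0$-type agents $1^j,2^j,3^j,4^j$ within each gadget so that they can always strictly improve among themselves regardless of what the $0^j$ agents do, while the $0^j$ agents can \emph{all} strictly improve simultaneously precisely when their acceptable trades can be partitioned into set-triples — i.e., when $S$ admits an exact cover. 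Verifying that no "unintended'' allocation makes everyone strictly better off (soundness of the reduction) is the part that will require the most care, since strictness removes the convenient rigidity that indifferences provided in the earlier gadgets.
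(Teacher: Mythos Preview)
Your high-level plan is correct and matches the paper's: reduce from X3C, take the endowment (autarkic) allocation as the test allocation, and argue that a strict Pareto improvement for \emph{all} agents exists iff the X3C instance has an exact cover. Your coNP-membership argument is fine.

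Where you diverge from the paper is in which earlier reduction you reuse. You reach for the core-hardness gadgets with agents $0^j,1^j,2^j,3^j,4^j$ and then worry about engineering the non-$0^j$ agents so they can always strictly improve while keeping lists short and preferences strict. The paper instead reuses the much simpler reduction from the ``most preferred allocation'' proposition: the agent set is just $N=R$, and each $j\in R$ has preferences
\[
j:\quad (e(k),i)\text{ for each }\{i,j,k\}\in S,\ \text{then } (h_j,j).
\]
Since each element of $R$ appears in exactly three sets of $S$, every agent has exactly three non-endowment outcomes plus the endowment, giving four acceptable pairs. The endowment allocation fails weak Pareto optimality iff there is an allocation giving every $j$ one of its three trade outcomes, which is exactly the exact-cover condition. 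There are no filler agents to worry about, and soundness is immediate.

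This also dissolves your anticipated obstacle about strictness. In the simpler construction the three trade outcomes for each agent can be ordered arbitrarily and strictly; the endowment is the unique worst acceptable outcome, so \emph{any} trade outcome is a strict improvement regardless of how the top three are ranked. Thus ``we can also make preferences strict'' is a one-line observation, not a design challenge. Your more elaborate gadget route could be made to work, but it buys nothing here and creates exactly the soundness and list-length headaches you flagged.
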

	            	\begin{proof}
		    Use the same proof as the previous one but consider the endowment allocation. We can also make preferences strict. 
		    \end{proof}

%
%
%
%


		
		\subsection{ILP and  Quadratic programming formulations}
		
		Although computing a Pareto optimal allocation is NP-hard, one can still write an ILP to compute a maximum utility and hence Pareto optimal allocation. Since agents only express ordinal preferences, one can suppose that each agent has utility $w_{i,j,k}$ for receiving house $h_j$ and having a visitor $k$ in her own house $h_i$. 
		
		\[ \max \sum_{i,j,k \in N} x_{i,j,k} \cdot w_{i,j,k}\]
		\[ x_{i,j,k} \in \{0,1\} \text{ for } i,j,k \in N, \text{ and } \]
		\[x_{i,k,i} = x_{i,i,k} = x_{k,i,i} = 0 \text{ for } i,k \in N, i \neq k\]
		\[\text{For fixed } i \in N, \sum_{j,k \in N} x_{i,j,k} = 1 \text{ for all } j,k \in N, \]
		\[\text{For fixed } j \in N, \sum_{i,k \in N} x_{i,j,k} = 1 \text{ for all } i,k \in N, \]
		\[\text{For fixed } k \in N, \sum_{i,j \in N} x_{i,j,k} = 1 \text{ for all } i,j \in N, \]
		
		where $w_{i,j,k}$ is a weighted value corresponding to the preference of agent $i$, receiving house $h_j$ and a visitor $k$ in her own house $h_i$. 
		
		One can also write a quadratic program that is even more compact. 

		\paragraph{Quadratic programming for maximum utility}

		\[ \max \sum_{i,j \in N} x_{i,j} \cdot w_{i,j,k} \cdot x_{j,k}\]
		\[\text{For fixed } i \in N, \sum_{j \in N} x_{i,j} = 1 \text{ for all } j \in N \]
		\[\text{For fixed } j \in N, \sum_{i \in N} x_{i,j} = 1 \text{ for all } i \in N \]


		  \subsection{Responsive extension preferences}


In certain scenarios, an  agent may have underlying preferences $\pref_i^H$ over houses and over tenants $\pref_i^N$. Her preferences  over the combinations of houses and tenants may depend naturally on their underlying preferences. 
In particular, we study the situation where the preferences are based on the \emph{responsive set extension}. We consider the {responsive set extension} that is a subset of responsive preferences that only relates allocations when one allocation is unambiguously at least as preferred as another. 
		  We say that agent $i$'s preferences $\pref_i$ over $H\times N$ are \emph{responsive} if for any $j,j\in N\setminus \{i\}$ and $h,h'\in H$,  
		  			    \[(h\pref_i^H h') \wedge (j\pref_i^N j') \iff (h,j)\pref_i^{RS} (h',j').\]

		  We say that allocation $p$ is RS-PO (Pareto optimal with respect to the responsive set extension) if there exists no other allocation $q$ such that $q(i)\succsim_i^{RS} p(i)$ for all $i\in N$ and $q(i)\succ_i^{RS} p(i)$ for some $i\in N$. Note that if an allocation is not RS-PO, it admits an unambiguous improvement for the agents. 
  
		We say that an allocation $p$ is RS-IR (individually rational with respect to the responsive set extension) if $p(i)\succsim_i^{RS} (h_i,i)$. Our main result in this section is that there exists a polynomial-time algorithm to compute an RS-IR and RS-PO allocation.

	

					\begin{algorithm}[h!]
					  \caption{}
					  \label{PRA-RS}
					\renewcommand{\algorithmicrequire}{\wordbox[l]{\textbf{Input}:}{\textbf{Output}:}} 
					 \renewcommand{\algorithmicensure}{\wordbox[l]{\textbf{Output}:}{\textbf{Output}:}}
					\begin{algorithmic}
						\small
						\REQUIRE Temporary Exchange Problem $(N,H,\pref=(\pref_1,\ldots,\pref_n))$ where $\pref_i$ is a responsive preference relations composed of $\pref_i^N$ and $\pref_i^H$.
						\ENSURE RS-PO and RS-IR allocation.
					\end{algorithmic}
					\algsetup{linenodelimiter=\,}
					  \begin{algorithmic}[1] 

		\STATE Set ${\succsim_i^H}'$ and ${\succsim_i^N}'$ to indifference among all acceptable (with respect to ${\succsim_i^H}$ and ${\succsim_i^N}$) outcomes for all $i\in N$.

					\WHILE{${\succsim_i^K}'$ is not \textbf{saturated} for some $i\in N$ for some $K\in \{N,H\}$}\label{step:while-condition}
					\STATE 
					Choose such an $i$ and choose to modify ${\succsim_i^K}'$ by setting to ${\succsim_i^K}''$ to ${\succsim_i^K}'$. ${\succsim_i^K}''$ are dichotomous preferences in which only those outcomes are acceptable that are acceptable wrt to ${\succsim_i^K}'$ except the last equivalence class wrt ${\succsim_i^K}$.
					\IF{\text{RS-AA}$(N,H,(\pref_{-i}',\pref_i''))$} \label{step:IF}
					\STATE $\pref_i'\longleftarrow \pref_i''$
					\ELSE
					\STATE Else label ${\pref_i^K}'$ as \textbf{saturated}.
					\ENDIF
					\ENDWHILE

					  \RETURN \text{RS-AA}$(N,H,\pref)$.
					 \end{algorithmic}
					\end{algorithm}

		  \begin{proposition}
		  	There exists a polynomial-time algorithm that returns an individually rational and Pareto optimal allocation with respect to responsive set extension. 
		  	\end{proposition}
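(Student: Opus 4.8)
The plan is to show that Algorithm~\ref{PRA-RS} is correct and runs in polynomial time, reading the subroutine RS-AA as the procedure that, on a profile of dichotomous responsive preferences where agent $i$ accepts a house set $A^H_i$ and a tenant set $A^N_i$, builds the bipartite graph on $N\cup H$ with an edge between agent $j$ and house $h_i$ exactly when $h_i\in A^H_j$ and $j\in A^N_i$, and returns a perfect matching if one exists (and reports failure otherwise). The first step is the routine observation that such a perfect matching is the same object as an allocation $p$ with $p(i)\in A^H_i$ and $p^{-1}(h_i)\in A^N_i$ for every $i$, i.e.\ one in which every agent gets an acceptable house and hosts an acceptable tenant; hence RS-AA is correct, and polynomial-time via any bipartite matching algorithm. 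Since $h_i\succsim^H_i h_i$ and $i\succsim^N_i i$, the endowment allocation is always acceptable, so RS-AA succeeds on the initial dichotomous profile; and because the loop commits a modification only when RS-AA still succeeds, RS-AA succeeds on the profile held at termination, so the algorithm does return an allocation, which I will call $p$.

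For the total running time, I would first record the invariant that throughout the run each acceptable set remains an upward-closed union of $\succsim^H_i$- (respectively $\succsim^N_i$-)equivalence classes: it starts that way, and each modification deletes exactly the current bottom class. Consequently every successful iteration strictly shrinks one of the $2n$ acceptable sets, each of which has at most $n$ classes, so there are $O(n^2)$ successful iterations; and an agent/dimension pair is labelled \textbf{saturated} at most once, after which the while-test never selects it again, so there are at most $2n$ unsuccessful iterations. Every iteration performs one RS-AA call, so the whole algorithm runs in polynomial time.

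It remains to show $p$ is RS-IR and RS-PO. RS-IR is immediate: $p$ is acceptable for the terminal profile, whose acceptable sets are contained in the original ones, so $p(i)\succsim^H_i h_i$ and $p^{-1}(h_i)\succsim^N_i i$, hence $p(i)\succsim^{RS}_i (h_i,i)$. For RS-PO, suppose some allocation $q$ satisfies $q(i)\succsim^{RS}_i p(i)$ for all $i$ and $q(i^*)\succ^{RS}_{i^*}p(i^*)$ for some $i^*$. From the upward-closedness invariant, responsiveness, and $q(i)\succsim^{RS}_i p(i)$, one gets that $q$ too is acceptable for the terminal profile. Also, acceptable sets only shrink during the run, so any modification that was recorded as \textbf{failed} at some moment is still infeasible when we keep that agent's proposed (tightened) set but replace every other agent's set by its smaller terminal one. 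Now $q(i^*)\succ^{RS}_{i^*}p(i^*)$ means a strict gain in at least one coordinate; by the symmetry of the two dimensions in the algorithm, assume it is the house coordinate, $q(i^*)\succ^H_{i^*}p(i^*)$. At termination $(i^*,H)$ is saturated, so at some point deleting the bottom class $T$ of $A^H_{i^*}$ failed. I would then argue that $p(i^*)$ must lie in $T$, for otherwise $p$ itself, checked coordinatewise, would witness feasibility of that deleted-class profile, contradicting the failure. But then $q(i^*)\succ^H_{i^*}p(i^*)\in T$ forces $q(i^*)\in A^H_{i^*}\setminus T$, and now $q$, again checked coordinatewise, witnesses feasibility of the very same deleted-class profile --- the required contradiction. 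Hence $p$ is RS-PO.

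I expect the last step to be the main obstacle: showing that \emph{local} non-improvability --- no single equivalence-class deletion preserves the existence of a fully acceptable allocation --- already forces \emph{global} RS-optimality. The delicate points are pinning down the quantifiers behind the monotonicity fact (one must align the profile at the instant a deletion was rejected with the terminal profile, and then check the candidate witnesses $p$ and $q$ against it agent by agent and coordinate by coordinate), and proving that the improving agent's assigned house, or tenant, actually sits in the precise class the algorithm tried to discard rather than merely somewhere in its acceptable set. The remaining parts are routine, and since the argument never uses how the nondeterministic choices in the loop are resolved, it covers the whole ``general class of algorithms'' referred to in the statement.
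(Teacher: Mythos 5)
Your proposal follows the paper's approach exactly: the same RS-AA-via-bipartite-perfect-matching subroutine and the same preference-refinement loop (Algorithm~\ref{PRA-RS}), with the same termination and RS-IR observations. The one place you go beyond the paper is the final step — the paper merely asserts that saturation of every agent/coordinate pair yields RS-Pareto optimality, whereas you actually prove it (via the monotonicity of infeasibility under shrinking acceptable sets, the fact that $p(i^*)$ must lie in the last rejected class, and the resulting contradiction with $q$ witnessing feasibility of the rejected profile); that argument is correct and is precisely the detail the paper's proof leaves out.
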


		  	First we outline a new problem called RS-AA. In this problem an agent gets an acceptable allocation if it gets some tenant agent that is one of the acceptable agents and a house that is one of the acceptable houses. 
	
		  	\begin{lemma}\label{lemma:acceptable}
		  		RS-AA can be solved in polynomial time.
		  		\end{lemma}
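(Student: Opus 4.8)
The plan is to show that \textsc{RS-AA} is nothing more than a perfect bipartite matching problem in disguise, and then invoke the classical polynomial-time algorithms for bipartite matching.

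First I would fix notation for the \textsc{RS-AA} instance: each agent $i$ comes with a set $A_i^H\subseteq H$ of \emph{acceptable houses} and a set $A_i^N\subseteq N$ of \emph{acceptable tenants}, and we seek an allocation $p$ (a bijection $N\to H$) with $p(i)\in A_i^H$ and $p^{-1}(e(i))\in A_i^N$ for every $i\in N$, or a certificate that no such $p$ exists. The key observation is that these two per-agent requirements involve different ``coordinates'' of the bijection: $p(i)\in A_i^H$ constrains which house $i$ receives, while $p^{-1}(e(i))\in A_i^N$ constrains who receives $i$'s own house $e(i)=h_i$. Writing $p$ as a permutation $\sigma$ with $p(i)=h_{\sigma(i)}$, the first is a constraint on $\sigma(i)$ and the second on $\sigma^{-1}(i)$, so the two constraints decouple and a permutation satisfying both is exactly a perfect matching of a suitable bipartite graph.

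Next I would construct that graph $G$: its vertex classes are $N$ and $H$, and there is an edge between agent $i$ and house $h_j$ iff $h_j\in A_i^H$ \emph{and} $i\in A_j^N$ (agent $i$ finds house $h_j$ acceptable, and the owner $j$ of $h_j$ finds $i$ an acceptable tenant). I claim the perfect matchings of $G$ are precisely the accepted allocations. Indeed, given a perfect matching viewed as an allocation $p$: the edge incident to the agent-vertex $i$ guarantees $p(i)\in A_i^H$, and the edge incident to the house-vertex $h_i=e(i)$ guarantees that the agent $p^{-1}(h_i)$ occupying $i$'s house lies in $A_i^N$, which is exactly the tenant condition; the converse direction is the same argument read backwards. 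Hence $G$ has a perfect matching iff the \textsc{RS-AA} instance is solvable, and any perfect matching, read as a map $N\to H$, is a solution.

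Finally I would observe that $G$ has $2n$ vertices and at most $n^2$ edges and can be built directly from the acceptable sets, so a maximum bipartite matching (and thus a perfect one if it exists) can be computed in polynomial time, e.g. by augmenting paths or Hopcroft--Karp; if the maximum matching has size $n$ we return the associated allocation, otherwise we report that no acceptable allocation exists. The only step that genuinely deserves care is the correspondence claim --- specifically, checking that the ``who occupies my house'' requirement is faithfully encoded by the edge constraint at the house-vertex and that the two per-agent requirements really do decouple into independent conditions on $\sigma$ and $\sigma^{-1}$; beyond that the argument is routine, and I do not expect a substantive obstacle.
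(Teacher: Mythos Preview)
Your proposal is correct and follows essentially the same approach as the paper: both reduce \textsc{RS-AA} to perfect bipartite matching on the graph with vertex classes $N$ and $H$ and an edge $\{i,h_j\}$ exactly when $i$ finds $h_j$ acceptable and $j$ finds $i$ an acceptable tenant. The paper phrases the construction as first ``forcing symmetry'' on the acceptability lists and then matching agents to acceptable houses, whereas you build the conjunctive edge condition directly and give a cleaner justification via the $\sigma$/$\sigma^{-1}$ decoupling, but the underlying reduction is identical.
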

		  		\begin{proof}
		  			In the algorithm we first force symmetry in multiple ways. If an agent $i$ does not have $j$ as one of his acceptable tenants, then $h_i$ cannot be one of $j$'s acceptable houses and is removed from $i$'s list \textit{if} $i$ does not have house $h_j$ as one of this acceptable houses, then $i$ is removed from $j$'s acceptable set of agents. 
		  			After making the preferences symmetric in this way, we can simply check whether there is a perfect matching that matches each agent to one his acceptable houses. We argue why this sufficient. 


		  			We will frame the problem in a perfect matching context. Suppose there is a RS-AA problem with a set of agents $N$ and a set of houses $H$. Every agent $i$ has two most preferred sets, $H_i \subseteq H$, the preferred houses for the agent, and $N_i \subseteq N$, the preferred tenants to their own house. A RS-AA solution requires that an agent $i$ is matched to a house $h_j$ owned by agent $j$ iff $j$ finds $i$ acceptable as a tenant and agent $i$ considers house $h_j$ acceptable. Hence the setup of a matching problem over the bipartite graph with vertices $N \uplus H$ with edges $(i, h_j) \in E$ existing iff agent $i$ prefers house $h_j$ and agent $j$ prefers tenant $i$. Then a perfect matching on $(N \uplus H, E)$ corresponds one-to-one with a solution of RS-AA.
		  			 			\end{proof}

					Lemma~\ref{lemma:acceptable} is interesting because similar problems are NP-complete for several matching settings in which a match has three dimensions~\citep{CHL+16a,HHZB17a,NgHi91a}. 
					We now show that the polynomial-time algorithm for RS-AA can be used as a sub-routine to compute an RS-IR and RS-PO allocation.
		  			In order to do so we present an adaptation and generalisation of the Preference Refinement Algorithm~\citep{ABH11c} that is defined for hedonic coalition formation and that requires complete preferences. For our problem we do not have complete preferences but each preference relation has multidimensional components---one involving a house and one involving a tenant agent. 
			
		  			\begin{lemma}
		  				If RS-AA runs in polynomial time for any allocation problem, then a RS-Pareto-Optimal allocation can be computed in polynomial time. 
		  				\end{lemma}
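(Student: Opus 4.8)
The plan is to show that Algorithm~\ref{PRA-RS} (PRA-RS) does the job: it runs in polynomial time and returns an allocation that is RS-IR and RS-PO. Fix, for each agent $i$ and each coordinate $K\in\{N,H\}$, the equivalence classes $E_i^{K,1}\succ_i^K E_i^{K,2}\succ_i^K\cdots$ of $\succsim_i^K$ restricted to the outcomes that are \emph{acceptable} under $\succsim_i^K$, where ``acceptable'' means $\succsim_i^H h_i$ for $K=H$ and $\succsim_i^N i$ for $K=N$. Throughout the run, the working relation $\succsim_i^{K}{}'$ is dichotomous with acceptable set $E_i^{K,1}\cup\cdots\cup E_i^{K,\ell_i^K}$ for a ``level'' $\ell_i^K$; the initial line puts every level at its maximum, and each \emph{refinement} step peels off one bottom class, i.e.\ decrements some $\ell_i^K$ by $1$. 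Two monotonicity facts will drive everything: a refinement only shrinks acceptable sets, and a component once labelled \textbf{saturated} is never chosen again, hence keeps its level forever.

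For the running time, note that each iteration of the \textbf{while} loop either commits a refinement, which strictly decreases the integer $\Phi:=\sum_{i,K}\ell_i^K$, or marks a component \textbf{saturated}, which strictly decreases the number of unsaturated components. Since $\Phi$ is initially $O(n^2)$ (each of the $2n$ components has at most $n$ classes) and there are $2n$ components, the loop executes $O(n^2)$ times, each time doing one RS-AA call, which is polynomial by Lemma~\ref{lemma:acceptable}, plus polynomial bookkeeping. For RS-IR, I first note RS-AA stays feasible: initially the endowment is a solution, because every agent finds house $h_i$ and tenant $i$ acceptable, so the edge $(i,h_i)$ survives the symmetrisation inside RS-AA; and a refinement is committed only when the \textbf{if} test confirms feasibility, while saturation leaves the profile unchanged. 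Hence the final RS-AA call returns an allocation $p$, and since every working acceptable set lies inside the original acceptable set, $p$ gives each $i$ a house $\succsim_i^H h_i$ and a tenant $\succsim_i^N i$, i.e.\ $p(i)\succsim_i^{RS}(h_i,i)$.

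For RS-PO, suppose some $q$ satisfies $q(i)\succsim_i^{RS}p(i)$ for all $i$ and $q(i_0)\succ_{i_0}^{RS}p(i_0)$ for some $i_0$; by responsiveness the strict improvement is strict in at least one coordinate, and by symmetry assume it is the house coordinate. Write $\ell_j^K$ for the levels at termination. Since $p$ solves RS-AA at those levels, $p$ puts each $j$'s house in the top $\ell_j^H$ house-classes and $j$'s tenant in the top $\ell_j^N$ tenant-classes; since $q(j)$ is coordinatewise at least as good, the same holds for $q$; and since $q$'s house for $i_0$ is \emph{strictly} better than $p$'s, it lies in the top $\ell_{i_0}^H-1$ house-classes. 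Now look back to the iteration that labelled $\succsim_{i_0}^{H}{}'$ saturated: there the \textbf{if} test failed, so RS-AA was \emph{infeasible} for the profile obtained from the then-current working profile by dropping $E_{i_0}^{H,\ell_{i_0}^H}$ from $i_0$'s acceptable houses. But every other component's acceptable set only shrank after that iteration, so $q$ --- which respects the final levels everywhere and puts $i_0$'s house in the top $\ell_{i_0}^H-1$ classes --- is a feasible solution of precisely that instance. This contradiction shows $p$ is RS-PO.

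The step I expect to be the main obstacle is this last one, matching the recorded infeasibility against the Pareto-dominating $q$: infeasibility is certified at the coarser profile present when a component is saturated, whereas domination must be refuted at the finer terminal profile, and reconciling the two relies exactly on refinements being monotone and saturation permanent. A secondary subtlety is that, unlike the original Preference Refinement Algorithm with complete single-valued preferences, here each preference is a pair of possibly incomplete coordinates, so one must check that refining a single coordinate of a single agent at a time still suffices --- which it does because a strict responsive improvement is strict in at least one coordinate and the ``coordinatewise at least as good'' bookkeeping controls the rest.
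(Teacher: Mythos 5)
Your proposal is correct and follows exactly the paper's approach: it analyzes Algorithm~\ref{PRA-RS}, the preference-refinement procedure that repeatedly calls the RS-AA oracle while peeling off bottom equivalence classes one component at a time. In fact your write-up is more complete than the paper's own two-paragraph sketch, which asserts rather than proves the terminal RS-Pareto-optimality; your contradiction argument (a dominating $q$ would have been a feasible witness at the very iteration where the relevant component was saturated, since all other acceptable sets only shrink afterwards) is precisely the missing justification, and your potential-function bound on the number of iterations is sound.
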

		  				\begin{proof}
							An oracle to solve RS-AA can be used to compute an RS-Pareto-Optimal and RS-IR allocation. The details are specified in Algorithm~\ref{PRA-RS}.

		We first start with the agent's actual preferences. An acceptable allocation indeed exists: one in which each agent stays where she is. We now start changing the agents' preferences, one component preference of one agent at a time. For any given agent $i$, we see whether an acceptable allocation still exists if the least preferred acceptable elements of the component preference are now unacceptable. If an allocation exists, in that case those elements are permanently marked as unacceptable and the algorithm proceeds. 
		  			We do so until no agent's preference can be modified.  At this point, we know that a perfect allocation is Pareto optimal with respect to the responsive preferences. 
		  					\end{proof}

	\section{Incentives}
	
Up till now we have assumed that agents act sincerely and report their  truthful preferences. In this section, we explore scenarios where agents may misreport their preferences.
 We show that for the Temporary Exchange markets, strategyproofness is incompatible with other desirable axioms.
	
	\begin{proposition}
		There exists no individually rational, Pareto optimal, and strategyproof mechanism. 
		\end{proposition}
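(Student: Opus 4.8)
The plan is to produce a single instance with just three agents in which exactly two allocations are simultaneously individually rational (IR) and Pareto optimal (PO), and then to show that, whichever of the two a mechanism returns, one of the agents can profitably misreport. Take $N=\{1,2,3\}$ with $e(i)=h_i$, and design the preferences so that: agent $1$ can improve on staying only by obtaining $h_2$, and strictly prefers the outcome $(h_2,2)$, obtained by swapping with agent $2$, to the outcome $(h_2,3)$, obtained in the $3$-cycle $1\to h_2\to 2\to h_3\to 3\to h_1$; agent $2$ can improve on staying only by obtaining $h_1$ through that swap (outcome $(h_1,1)$) or, strictly better, by obtaining $h_3$ in the $3$-cycle (outcome $(h_3,1)$); and agent $3$'s only acceptable non-trivial outcome is $(h_1,2)$, which arises precisely in the $3$-cycle. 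Everything else is ranked below staying, with the unacceptable outcomes ordered arbitrarily to complete the weak orders. A finite check over the $3!=6$ allocations then shows that the IR and PO allocations are exactly the swap $p$ of agents $1$ and $2$ (with $3$ staying put) and the $3$-cycle $q$; moreover agent $1$ strictly prefers $p$ while agents $2$ and $3$ strictly prefer $q$.

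Next I would run the manipulation argument. Suppose $f$ were IR, PO and strategyproof; on the instance above it must return $p$ or $q$. If $f$ returns $p$, let agent $2$ instead report a preference in which only $(h_3,1)$ is acceptable. Then $p$ is no longer IR, so $f$ cannot return it, and re-examining the six allocations under the misreport shows that $q$ is now the unique IR-and-PO allocation; hence $f$ returns $q$, giving agent $2$ the outcome $(h_3,1)$, which under her true preferences she strictly prefers to the $(h_1,1)$ she had under $p$ --- a profitable deviation. Symmetrically, if $f$ returns $q$, let agent $1$ report a preference in which only $(h_2,2)$ is acceptable; this destroys the individual rationality of $q$, forces $f$ onto $p$ as the unique surviving IR-and-PO allocation, and hands agent $1$ the outcome $(h_2,2)$, which she truthfully prefers to the $(h_2,3)$ she got under $q$. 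Either way strategyproofness fails, which is the desired contradiction.

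The main obstacle is calibrating the preferences so that the IR-and-PO set is exactly $\{p,q\}$ both in the base instance and in each of the two manipulated instances --- in particular one must ensure that demoting a single outcome does not let a ``partial cycle'' slip in as a new IR-and-PO allocation, and that neither $p$ nor $q$ Pareto-dominates the other (this is exactly where the strict rankings $(h_2,2)\succ_1(h_2,3)$ and $(h_3,1)\succ_2(h_1,1)$ do the work, blocking the only candidate dominations). Once the preferences are pinned down this is a routine finite verification; the only points demanding care are bookkeeping the outcomes correctly --- recalling that an agent's outcome records both the house she receives and the identity of whoever occupies her own house --- and confirming that the two misreports are legitimate weak orders over $N\times H$, which they plainly are.
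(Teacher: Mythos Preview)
Your argument is correct and in fact cleaner than the paper's. The overall strategy is the same --- exhibit an instance whose IR+PO set has exactly two allocations, and show that whichever one the mechanism picks, some agent can misreport to force the other --- but you carry it out with three agents and a single misreport in each branch, whereas the paper uses four agents $\{0,1,2,3\}$ and, in one of the two branches, has to nest a second manipulation: when agent $1$ misreports the mechanism might return either of two IR+PO allocations $r,s$, and if it returns $s$ the paper must then argue that agent $3$ can profitably deviate from the profile in which agent $1$'s misreport is treated as her true preference. Your three-agent construction avoids this detour because each misreport collapses the IR set to exactly the endowment and the desired target allocation, so uniqueness is immediate. The trade-off is only that your proof relies on designing agent $3$ with a single acceptable non-trivial outcome, which makes the instance feel slightly more special; the paper's four-agent example has every agent with four acceptable non-trivial outcomes, so it demonstrates the impossibility on a somewhat ``richer'' preference profile. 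Neither point affects validity, and your verification of the six allocations (and of the post-misreport IR sets) goes through exactly as you describe.
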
	
		\begin{proof}
			Consider the following instance of the problem. 
			\begin{itemize}
				\item $N=\{0,1,2,3\}$
				\item $H=\{h_0,\ldots, h_3\}$
				\item $e(i)=h_i$ for all $i\in \{0,\ldots, 3\}$
				\item The preferences are as follows.
	\begin{align*}
					1:&\quad (h_2,2), {(h_2,0)}, (h_0,2), (h_0,0), (h_1,1)\\
					2:&\quad  {(h_3,1)}, (h_3,3),(h_1,1), (h_1,3),  (h_2,2)\\
					3:&\quad {(h_0,2}),(h_0,0), (h_2,2), (h_2,0),  (h_3,3)\\
					0:&\quad {(h_1,3)},(h_1,1), (h_3,3), (h_3,1), (h_0,0)
					\end{align*}
					\end{itemize}
					

	%
	%
	%

          	The only two individually rational allocations are  $p$ and $q$.
	\bigskip
			
			Allocation $p$:
			\begin{align*}
				1:&\quad (h_2,2), \mathbf{(h_2,0)}, (h_0,2), (h_0,0), (h_1,1)\\
				2:&\quad  \mathbf{(h_3,1)}, (h_3,3),(h_1,1), (h_1,3),  (h_2,2)\\
				3:&\quad \mathbf{(h_0,2}),(h_0,0), (h_2,2), (h_2,0),  (h_3,3)\\
				0:&\quad \mathbf{(h_1,3)},(h_1,1), (h_3,3), (h_3,1), (h_0,0)
				\end{align*}

				Allocation $q$:
		\begin{align*}
						1:&\quad \mathbf{(h_2,2)}, {(h_2,0)}, (h_0,2), (h_0,0), (h_1,1)\\
						2:&\quad  {(h_3,1)}, (h_3,3),\mathbf{(h_1,1)}, (h_1,3),  (h_2,2)\\
						3:&\quad {(h_0,2}),\mathbf{(h_0,0)}, (h_2,2), (h_2,0),  (h_3,3)\\
						0:&\quad {(h_1,3)},(h_1,1), \mathbf{(h_3,3)}, (h_3,1), (h_0,0)
						\end{align*}

	If the outcome under the original preferences is $q$, then agent 2 can misreport as follows to obtain an outcome in bold which is the only Pareto optimal and individually rational allocation under the misreported preferences.  
	\begin{align*}
		1:&\quad {(h_2,2)}, {(h_2,0)}, (h_0,2), \mathbf{(h_0,0)}, (h_1,1)\\
		2:&\quad  \mathbf{(h_3,3)},  (h_2,2)\\
		3:&\quad {(h_0,2}),{(h_0,0)}, \mathbf{(h_2,2)}, (h_2,0),  (h_3,3)\\
		0:&\quad {(h_1,3)},\mathbf{(h_1,1)}, {(h_3,3)}, (h_3,1), (h_0,0)
		\end{align*}
	
	In an individually rational allocation
	either $2$ gets outcome $(h_2,2)$ or $(h_3,3)$. If $2$ gets outcome $(h_2,2)$, then $1$ gets outcome $(h_0,0)$ or $(h_1,1)$. If $1$ gets outcome $(h_1,1)$, then the outcome is $e$ which is not Pareto optimal.  If $1$ gets outcome $(h_0,0)$, then $0$ gets outcome $(h_1,1)$. Then $3$ gets outcome $(3,h_3)$ which implies that $2$ and $3$ can swap each others' houses to get a Pareto improvement.  Thus in an individually rational allocation, $2$ gets outcome $(h_3,3)$. Hence $3$ gets outcome $(h_2,2)$. By Pareto optimality, $1$ gets outcome $(h_0,0)$ and $0$ gets $(h_1,1)$. Under the allocation in bold, $2$ gets outcome $(h_3,3)$ thereby violating strategyproofness.
	
	\bigskip
We now suppose that the outcome of the original preferences is $p$. If the outcome is $p$, then 1 can misreport to obtain an outcomes $r$ or $s$ in bold which are the only Pareto optimal and individually rational allocations under the following preferences. 	
	\bigskip
						
Allocation $r$
\begin{align*}
				1:&\quad \mathbf{(h_2,2)}, (h_1,1)\\
				2:&\quad  {(h_3,1)}, (h_3,3),\mathbf{(h_1,1)}, (h_1,3),  (h_2,2)\\
				3:&\quad {(h_0,2}),\mathbf{(h_0,0)}, (h_2,2), (h_2,0),  (h_3,3)\\
				0:&\quad {(h_1,3)},(h_1,1), \mathbf{(h_3,3)}, (h_3,1), (h_0,0)
				\end{align*}

		Allocation $s$		
	\begin{align*}
				1:&\quad {(h_2,2)}, \mathbf{(h_1,1)}\\
				2:&\quad  {(h_3,1)}, \mathbf{(h_3,3)},{(h_1,1)}, (h_1,3),  (h_2,2)\\
				3:&\quad {(h_0,2}),{(h_0,0)}, \mathbf{(h_2,2)}, (h_2,0),  (h_3,3)\\
				0:&\quad {(h_1,3)},(h_1,1), {(h_3,3)}, (h_3,1), \mathbf{(h_0,0)}
				\end{align*}

		Allocation $r$ already provides a beneficial outcome for agent $1$ thereby violating strategyproofness. Suppose the outcome is allocation $s$. In that case agent $3$ can misreport as follows to obtain the only Pareto optimal and individually rational allocation which provides an improvement for agent $3$ thereby violating strategyproofness. 
		\begin{align*}
			1:&\quad \mathbf{(h_2,2)}, (h_1,1)\\
			2:&\quad  {(h_3,1)}, (h_3,3),\mathbf{(h_1,1)}, (h_1,3),  (h_2,2)\\
			3:&\quad \mathbf{(h_0,0)},  (h_3,3)\\
			0:&\quad {(h_1,3)},(h_1,1), \mathbf{(h_3,3)}, (h_3,1), (h_0,0)
			\end{align*}

%

		This concludes the proof.	
					
			\end{proof}  
			
			We say that a mechanism is core-consistent if it returns a core stable allocation whenever a core stable allocation exists. The same argument can be used to prove that there exists no mechanism that is core-consistent and strategyproof. 
			
			\begin{proposition}
				 There exists no core-consistent and strategyproof mechanism. 
				\end{proposition}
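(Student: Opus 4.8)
The plan is to reuse essentially the same instance and the same case analysis that established the absence of an individually rational, Pareto optimal, and strategyproof mechanism, but now reading every occurrence of ``Pareto optimal and individually rational'' as ``core stable.'' First I would check that in the four-agent instance from the previous proposition, the only two individually rational allocations $p$ and $q$ are in fact both core stable, so that any core-consistent mechanism must pick one of them on the original profile. This is a small finite verification: one checks that neither $p$ nor $q$ admits a blocking coalition, using that any blocking coalition must consist of agents who all strictly improve via an internal exchange, and that the acceptable-outcome structure of the instance (each agent only willing to trade with two neighbours in a cyclic pattern) severely limits candidate coalitions.

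Next I would walk through the three misreporting scenarios exactly as before. In each misreported profile, the original argument shows that a particular allocation (the one in bold) is the unique individually rational and Pareto optimal allocation; I would additionally note that this unique allocation is therefore also the unique core stable allocation — indeed, in each of these reduced profiles the bold allocation is individually rational, and any core stable allocation is in particular individually rational and Pareto optimal (core stability implies Pareto optimality when the grand coalition can be a blocking coalition), so uniqueness of the IR-and-PO allocation forces uniqueness of the core stable allocation. Hence a core-consistent mechanism is pinned to the bold allocation on each misreported profile, and in each case the deviating agent ($2$, then $1$, then $3$) strictly prefers that outcome to her outcome under $p$ or $q$, contradicting strategyproofness.

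The one genuinely new thing to verify, and the step I expect to be the main obstacle, is the implication ``core stable $\Rightarrow$ Pareto optimal'' in this model: I would argue that if an allocation $p$ were Pareto-dominated by $q$, then — because $q$ only reassigns houses among agents and every agent is at least as happy under $q$ with at least one strictly happier — the grand coalition $N$ (or the support of the set of agents who are not fixed points shared by $p$ and $q$) blocks $p$ via $q$; one must be slightly careful because blocking requires every member to be \emph{strictly} better off, so the blocking coalition is the set of agents strictly improved by $q$, and one checks that $q$ restricted to that set still uses only houses owned within the set, which follows since the remaining agents keep their houses under $q$. Once that lemma is in hand, the rest is a line-by-line transcription of the earlier proof with ``core stable'' substituted for ``Pareto optimal and individually rational,'' and I would simply state that ``the same argument'' applies, as the authors do.
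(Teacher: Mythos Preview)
Your overall plan---reuse the four-agent instance and the same misreports, replacing ``IR and PO'' by ``core stable''---is exactly what the paper intends by ``the same argument.'' The part that does not go through is your bridging lemma ``core stable $\Rightarrow$ Pareto optimal.'' With the paper's blocking notion (every member of the blocking coalition must be \emph{strictly} better off), core stability only gives \emph{weak} Pareto optimality via the grand coalition. Your attempted repair---take $S$ to be the set of agents strictly improved by the dominating allocation $q$ and restrict $q$ to $S$---fails because the agents outside $S$ need not keep \emph{their own} houses under $q$; under strict preferences an indifferent agent $i$ satisfies $q(i)=p(i)$, not $q(i)=e(i)$, so $q(S)$ need not equal $e(S)$ and $q$ restricted to $S$ is not a feasible internal reallocation. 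Hence the implication you rely on to import uniqueness of the IR+PO allocation to uniqueness of the core stable allocation is unjustified.

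The fix is simpler than the lemma you are trying to prove: reread the case analysis in the previous proof and notice that every time an IR allocation is discarded for failing Pareto optimality, the witnessing improvement is in fact a pairwise swap in which \emph{both} agents strictly gain (e.g.\ ``$2$ and $3$ can swap,'' or $\{0,1\}$ swapping to beat the endowment). Each such swap is a blocking coalition, so the very same case analysis already eliminates those allocations from the core. Together with the direct check that $p$ and $q$ are core stable on the original profile (which you correctly flag as a finite verification), this pins a core-consistent mechanism to $p$ or $q$ and then to the bold allocation in each misreported profile---in fact $s$ is blocked by $\{0,3\}$ under agent~$1$'s misreport, so the core there is just $\{r\}$ and the third deviation is not even needed. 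So drop the general lemma and instead observe that the specific Pareto improvements exhibited in the earlier proof are blocking coalitions.
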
    
		        
  %
  %
  %
  %
  %
  
    In the subsequent sections, we show negative results from the previous sections can be circumvented if we consider special structure on the preferences.

  \section{Predominant Preferences}
  
  
  In this section, we consider preference restrictions under which we obtain positive axiomatic results.

		    \subsection{House-predominant Preferences}
		    
		    A particularly restricted version of consistent preferences is in which agents have strict underlying preferences over the houses,  care predominantly about the houses and use the preferences over agents as a tie-breaker. We will refer to these preferences as \emph{house-predominant} preferences.  The core is non-empty for these preferences. This follows from the fact that the TTC mechanism designed for basic housing markets works for our setting. We first describe the TTC mechanisms.

		    \begin{quote}
		    TTC: For a housing market with strict preferences, we first construct the corresponding directed graph $G(\pref)=(V,E)$ where $V=N\cup H$ and $E$ is specified as follows: each house points to its owner and each agent points to the most preferred house in the graph. Then, we start from an agent and walk arbitrarily along the edges until a cycle is completed. 
		    This cycle is removed from $G(\pref)$. Within the removed cycle, each agent gets the house he was pointing to. The graph $G(\pref)$ is \emph{adjusted} so that the remaining agents point to the most preferred houses among the remaining houses. The same step is repeated until the graph is empty. 
		    \end{quote}
		    
		    \begin{proposition}
			    For house-predominant  preferences in which agents have strict preferences over houses, the TTC rule is core stable and Pareto optimal.
			    \end{proposition}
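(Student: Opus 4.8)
The plan is to show that for house-predominant preferences, running TTC on the induced strict preference order over houses produces an allocation that inherits the classical Shapley-Scarf guarantees, and then to argue that these guarantees transfer to the temporary-exchange setting essentially because house-predominance makes the house component lexicographically dominant.

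First I would make the reduction to housing markets precise. Given a house-predominant instance, each agent $i$ has a strict order $\succ_i^H$ over houses; build the associated Shapley-Scarf housing market $(N,H,e,\succ^H)$ and let $p$ be the TTC outcome. By the classical result of \citet{ShSc74a}, $p$ is Pareto optimal and core stable in that housing market. I then need two observations relating preferences: (a) if $p(i) \succ_i^H q(i)$ then $p \succ_i q$ in the temporary-exchange preference, because the house component is the primary criterion; and (b) if $p(i) = q(i)$ (same house) then the comparison is decided by the tenant component, but when $p(i)=q(i)$ for the house agent $i$ owns, $e(i)$, who occupies it is determined and $p \sim_i q$ only when the occupant is also the same — crucially, though, Pareto/core arguments only ever need the direction ``gets a strictly better house $\Rightarrow$ strictly prefers''. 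This is the content of house-predominance: a strict improvement in house dominates any change (favorable or not) in the tenant.

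Next, for Pareto optimality of $p$ in the temporary-exchange instance, suppose for contradiction some allocation $q$ Pareto-dominates $p$: $q \succsim_i p$ for all $i$ and $q \succ_j p$ for some $j$. I split on whether $q$ and $p$ agree on all house assignments. If they assign the same house to every agent, then every outcome component ``who gets $e(i)$'' is identical too (since $q(i)=p(i)$ as maps means $q=p$ as functions, hence $q^{-1}(e(i)) = p^{-1}(e(i))$), contradicting $q \succ_j p$. Otherwise some agent receives a different house under $q$; I claim then that $q$ weakly improves every agent's house and strictly improves someone's house, which would contradict Pareto optimality of $p$ in the housing market. The point is: if $q \succsim_i p$ in the temporary setting but $q(i) \ne p(i)$, then by house-predominance we cannot have $p(i) \succ_i^H q(i)$ (that would force $p \succ_i q$), so $q(i) \succsim_i^H p(i)$; and the agent $j$ with $q(j) \ne p(j)$ must then have $q(j) \succ_j^H p(j)$ (if $q(j)=p(j)$ as houses we would be in the first case for that coordinate; combined with $q \succ_j p$ this needs a short argument — actually I would pick $j$ to be an agent whose house assignment changes, which exists in this case, and note $q \succsim_j p$ plus $q(j)\neq p(j)$ gives $q(j)\succ_j^H p(j)$). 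That yields a Pareto improvement in the housing market — contradiction.

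For core stability, suppose a coalition $S$ blocks $p$ via an allocation $r$ on $S$ with $r(i) \in e(S)$ and $r(i) \succ_i p(i)$ for all $i \in S$. By house-predominance, $r(i) \succ_i p(i)$ in the temporary setting implies $r(i) \succsim_i^H p(i)$, and in fact if $r(i) = p(i)$ as houses then the strict improvement would have to come from the tenant component — but since $p$ is already core stable in the housing market we need $r(i)$ to be a strict house-improvement for at least one member to derive a contradiction there. The cleanest route: show that any coalition strictly blocking $p$ in the temporary setting yields a coalition weakly blocking $p$ in the housing market with at least one strict house-improvement, contradicting Pareto optimality of the internal reallocation, or directly contradicting core stability of $p$ via a standard TTC argument. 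The main obstacle I anticipate is exactly this last point — handling blocks where an agent's house stays the same but they strictly prefer the new tenant. I expect the resolution is that house-predominance with strict house preferences forces: in any strict blocking coalition, following the ``points to strictly better house'' structure, there must be at least one agent getting a strictly better house, and then the usual Shapley-Scarf/TTC cycle argument applies; alternatively one shows the blocking coalition must contain a sub-coalition that blocks in the pure housing market. I would write this carefully, as it is where the generalization could quietly fail if the tie-breaking interacted badly — but house-predominance is precisely the hypothesis that prevents that.
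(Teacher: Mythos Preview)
Your approach---reduce to the classical Shapley--Scarf market, invoke the known TTC guarantees there, and lift back via house-predominance---differs from the paper's, which argues directly from the cycle structure of TTC: for core stability the paper notes that each agent, when removed, receives her most preferred house among all remaining, hence cannot sit in a blocking coalition with later-removed agents; for Pareto optimality it inducts on the cycles to show any weakly dominating allocation must assign the same houses. Your reduction is a clean alternative, and your Pareto-optimality argument is correct as written.

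The gap is in your core-stability step. You correctly isolate the delicate case: a member $i$ of a blocking coalition $S$ might have $r(i)=p(i)$ and strictly improve only through the tenant component. But neither of your proposed resolutions closes it. The sub-coalition idea fails: setting $S'=\{i\in S: r(i)\neq p(i)\}$ does not give $r(i)\in e(S')$ in general, so $S'$ need not block in the housing market. And appealing to ``core stability'' of TTC in the housing market is too weak: what your argument actually produces is $r(i)\succsim_i^H p(i)$ for all $i\in S$ with strict inequality for at least one $i$ (one can show $S'\neq\emptyset$ because $r=p$ on all of $S$ would force the tenant assignments to coincide as well), and that is only a \emph{weak} block. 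The ordinary core forbids strict blocks; you need the classical fact that under strict preferences the TTC allocation is the unique \emph{strong} core element---no coalition can even weakly block---to finish. Alternatively, argue directly: if every agent of $S$ lying in the earliest TTC cycle $S_k$ meeting $S$ receives the same house under $r$ as under $p$, then following successors shows $S_k\subseteq S$ and $r=p$ on $S_k$, whence the tenants for those agents also coincide, contradicting strict improvement. The paper's own proof is terse on exactly this point, so the subtlety is genuine; but you should name the correct tool (strong core) or carry out the cycle argument explicitly rather than leave it as an expectation.
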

		    \begin{proof}
			 We first argue for core stability.    When any agent $i$ is removed from the graph along with his allocated house $h$, then $h$ is a maximally preferred house for $i$ from among the remaining houses. Therefore $i$ cannot be in a blocking coalition with the agents remaining in the graph.

		We now argue for Pareto optimality. 	    
Let $S_k$ be the $k$th trading cycle that is removed from the trading cycle graph. 
In any allocation $x$ in which none of the agents are worse off than in the allocation produced by TTC, these agents must be allocated to houses in $S_1$. Taking this as the base case, it follows by easy induction that in $x$, the agents of $S_k$ must be allocated to houses in the $k$th trading cycle. Next, suppose that $i$ is a agent in $S_k$ for some $k$. Then no house in $S_k$ is more preferred by $i$ than the house that the TTC mechanism assigns him to. It follows that no agent is strictly better off in $x$ than in the allocation produced by TTC. \end{proof}

		    		    \begin{proposition}
		    			     For house-predominant preferences in which agents have strict preferences over houses, the TTC rule is strategyproof.
		    			    \end{proposition}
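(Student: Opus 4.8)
The plan is to reduce to the classical strategyproofness of Gale's TTC for Shapley--Scarf housing markets and then handle separately the new ``tenant'' coordinate of the outcome. House-predominant preferences are lexicographic with the house coordinate dominant, so $(h,j)\succ_i (h',j')$ whenever $h\succ_i^H h'$, and $(h,j)$ is compared by $\succ_i^N$ only when $h=h'$. Crucially, the TTC rule described above only ever consults each agent's strict order over houses---each agent always points at her most preferred remaining house---so the house that agent $i$ ends up with is a function of the reported house orders alone, and reporting a different preference relation amounts (as far as TTC is concerned) to reporting a different house order.

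First I would recall the standard fact that by misreporting her house order agent $i$ can never obtain a house she strictly $\succ_i^H$-prefers to the one she gets under truthful reporting: the trading cycles not containing $i$ that are removed before $i$'s cycle can be removed in the same order no matter what $i$ reports, since the formation of such a cycle never depends on $i$'s pointer; hence $i$ always picks her favorite house out of the same leftover set. I would reproduce this argument in our notation. Consequently, the only candidate profitable manipulation is one that leaves $i$ with the very same house $a$ as the truthful report but a strictly $\succ_i^N$-better tenant.

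The heart of the proof is therefore the claim: \emph{given the reports of all agents other than $i$, the tenant $i$ receives under TTC is a function of the house $i$ receives.} The structural observation is that $i$ and her own house $e(i)$ always lie in the same TTC trading cycle: $e(i)$ is the unique in-edge of $i$ in the trading graph (houses point only to their owners, agents point only to houses), so $i$'s cycle has the shape $j\to e(i)\to i\to a\to\cdots\to j$, where $a=p(i)$ is $i$'s house and $j=p^{-1}(e(i))$ is $i$'s tenant. The tail from $a$ around to $e(i)$ consists only of agents other than $i$, each pointing at her favorite house not yet removed. One then argues that the houses removed before this cycle---in particular those relevant to the agents along the tail---do not depend on $i$'s report so long as $i$ still receives $a$, using order-independence of TTC and the fact that cycles avoiding $i$ are removed independently of $i$'s report. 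Granting the claim, truthful reporting yields $i$ her best achievable house $a^\ast$ and hence the uniquely determined tenant $t(a^\ast)$, whereas any misreport yields some $(a,t(a))$ with $a\preceq_i^H a^\ast$: this is strictly worse by house-predominance when $a\neq a^\ast$, and identical when $a=a^\ast$. Either way there is no profitable deviation.

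The main obstacle is making that claim fully rigorous, i.e., showing that a misreport which does not change $i$'s house cannot change $i$'s tenant. This needs a careful induction establishing that the trading cycles removed before $i$'s cycle---and thus which houses along the eventual tail have already departed---are invariant under changes to $i$'s reported preferences, conditional on $i$ still receiving the same house (in fact this yields the stronger statement that a house-preserving misreport preserves the entire allocation). The rest---the reduction and the lexicographic bookkeeping---is routine.
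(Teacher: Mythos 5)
Your proposal follows essentially the same route as the paper: first invoke the classical argument that no misreport can yield a strictly $\succ_i^H$-better house, then observe that by house-predominance the only remaining worry is a same-house misreport that improves the tenant, and rule that out by showing the tenant is determined once the house is fixed. The ``main obstacle'' you flag---that a house-preserving misreport cannot alter who ends up in $e(i)$---is exactly the (well-known) nonbossiness of TTC, which the paper likewise asserts with only a one-line justification, so your sketch is, if anything, more explicit about what needs proving than the paper's own argument.
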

					    The argument is similar to the strategyproof argument for TTC for the standard Shapley-Scarf market. An agent cannot get a better house by misreporting. Suppose the agent gets the same house but a different agent takes her house. But this is not possible by the specification of TTC because the agent gets the house she points to in a cycle. 
					    

  \subsection{Tenant-predominant Preferences}

We can also turn around the problem completely so as to consider a setting in which agents have strict underlying preferences over other agents,  care predominantly about the tenants and use the preferences over houses as a tie-breaker. We will refer to such preferences as ``tenant-predominant preferences.''


		    
		    \begin{quote}
		    TTTC: For a housing market with strict preferences, we first construct the corresponding directed graph $G(\pref)=(V,E)$ where $V=N\cup H$ and $E$ is specified as follows: each agent points to its house and each house points to the agent most preferred by its owner. Then, we start from an agent and walk arbitrarily along the edges until a cycle is completed. 
		    This cycle is removed from $G(\pref)$. Within the removed cycle, each house it taken by the agent it was pointing to. The graph $G(\pref)$ is \emph{adjusted} so that the remaining houses point to the agent most preferred by its owner. The same step is repeated until the graph is empty. 
		    \end{quote}

The following propositions can be proved for TTTC and tenant-predominant preferences. The arguments for the propositions above are very similar to those of TTC for house-predominant preferences. 
	
\begin{proposition}
	For tenant-predominant preferences which agents have strict preferences over other agents, the TTTC mechanism is core stable and Pareto optimal.
	\end{proposition}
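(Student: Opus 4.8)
The plan is to mirror the proof given for TTC and house-predominant preferences, exploiting the full symmetry between the two components of an outcome: where the house-predominant proof runs the trading-cycle argument on houses and uses the tenant component only as a tie-breaker, the tenant-predominant proof runs it on tenants and uses the house component as a tie-breaker. Concretely, I would first observe that the outcome for agent $i$ under any allocation $p$ is the pair $(p(i),p^{-1}(e(i)))$, and that tenant-predominant preferences are lexicographic with the \emph{tenant} coordinate $p^{-1}(e(i))$ dominating. In the graph $G(\pref)$ built for TTTC, each house $h_i$ points to the agent that $i$ (its owner) most prefers to host, so a trading cycle determines, for each owner in the cycle, which agent becomes the tenant of her house. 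Thus TTTC assigns each owner on a removed cycle her maximally preferred tenant among the agents still present.

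For core stability I would argue exactly as in the house-predominant case. Let $S_k$ be the $k$th cycle removed. When the agents and houses of $S_k$ are removed, each owner $i\in S_k$ receives as tenant an agent that $i$ most prefers among all agents not yet removed (i.e. among those in $S_k\cup S_{k+1}\cup\cdots$). Now suppose a coalition $C$ blocks the TTTC allocation; take the smallest $k$ such that $C$ meets $S_k$, and let $i\in C\cap S_k$. In any reallocation internal to $C$, agent $i$ would be hosting some agent from $C$, hence from $S_k\cup S_{k+1}\cup\cdots$ (since $C$ avoids $S_1,\dots,S_{k-1}$); but $i$ already gets her most preferred tenant from that set under TTTC, so $i$ is not strictly better off — possibly she could only improve on the house coordinate, but tenant-predominance makes that impossible given the tenant coordinate cannot strictly improve. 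Here I should be slightly careful: I need that $i$'s tenant under the deviation is \emph{weakly} worse and that the house coordinate is therefore the only room left, which a clean lexicographic argument handles; if $i$'s deviating tenant is strictly worse, $i$ is strictly worse off; if equal, the house coordinate can make $i$ indifferent at best only if strict house preferences allow, but then some other member of $C\cap S_k$ must be the one strictly improving, and the same argument applies to that member. This closes the contradiction.

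For Pareto optimality I would replicate the inductive argument verbatim with ``houses'' replaced by ``tenants.'' In any allocation $x$ in which no agent is worse off than under TTTC, the owners of houses in $S_1$ must, by the cycle structure, still be hosting agents from $S_1$ (anyone in $S_1$ is getting their top tenant overall, so they can't do better, and a counting/matching argument forces the $S_1$ owners and the $S_1$ tenants to be matched among themselves); induction then shows the owners of $S_k$ are matched to tenants within $S_k$. Since each owner in $S_k$ already receives her most preferred tenant from $S_k\cup S_{k+1}\cup\cdots \supseteq S_k$, no owner is strictly better off on the tenant coordinate in $x$, and tenant-predominance again rules out a strict improvement via the house coordinate. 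Hence $x$ Pareto-dominates nothing, and TTTC is Pareto optimal.

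The main obstacle, and the only place where the symmetry is not completely mechanical, is the handling of the tie-breaking (house) coordinate in both arguments: the house-predominant proof never needed to worry about the second coordinate because strict first-coordinate preferences did all the work, whereas here I must argue that an agent whose tenant coordinate is held weakly worse cannot sneak a strict gain through the house coordinate, and must be careful that the relevant ``strictly-better-off agent'' in a putative block or dominating allocation can always be taken to have a strict gain on the tenant coordinate (or else pass the argument to another coalition member). I would make this precise by a short lemma: under tenant-predominant preferences, if $p\succ_i q$ then the tenant of $i$'s house under $p$ is weakly $\pref_i^N$-preferred to that under $q$, with strictness unless the tenants are equal; applying this to the minimal-$k$ member of the coalition (resp. to any strictly-improved agent in the dominating allocation) and invoking that TTTC gives that agent her top available tenant yields the contradiction. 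Everything else transfers from the two propositions already proved for house-predominant preferences.
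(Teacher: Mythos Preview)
Your plan is exactly what the paper intends: it states only that ``the arguments for the propositions above are very similar to those of TTC for house-predominant preferences'' and gives no further detail, so mirroring the TTC proof with the roles of houses and tenants swapped \emph{is} the paper's proof. You are in fact more careful than the paper about the secondary (tie-breaking) coordinate, and your Pareto-optimality argument is correct.

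There is, however, a genuine wobble in your core-stability paragraph. You write that if $i$'s deviating tenant equals her TTTC tenant then ``the house coordinate can make $i$ indifferent at best,'' and then appeal to ``some other member of $C\cap S_k$'' being the one who strictly improves. Neither step is right: with the same tenant, a strictly better house \emph{would} make $i$ strictly better (tenant-predominance does not rule this out), and in a blocking coalition every member of $C$, including $i$, must strictly improve, so you cannot pass the buck. The clean fix is the cycle-walk you already deploy for Pareto optimality: once $i$'s deviating tenant equals her TTTC tenant $j$, you get $j\in C\cap S_k$, and the same reasoning forces $j$'s deviating tenant to equal her TTTC tenant, and so on around the whole cycle $S_k\subseteq C$. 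This pins down not only every tenant in $S_k$ but also every house (each agent's house is determined by who hosts her), so $i$'s outcome under the deviation is identical to her TTTC outcome, contradicting the strict improvement required of $i$. With that correction your argument goes through and matches the paper's intended proof.
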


			    \begin{proposition}
			    			    	For tenant-predominant preferences which agents have strict preferences over other agents, the TTTC rule is strategyproof.
			    			    \end{proposition}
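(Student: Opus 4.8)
The plan is to mirror, step by step, the strategyproofness argument for TTC under house-predominant preferences, swapping the roles of houses and tenants throughout. First I would recall the structure of TTTC: each agent points to her own house, each house points to the agent most preferred by its owner, cycles are extracted successively, and when a cycle is removed each house goes to the agent it points to. The key structural invariant I would establish is the tenant analogue of the ``no agent can get a better house'' lemma: when a house $h_i$ (with owner $i$) is removed in the $k$th round, the agent it is assigned to is the most preferred \emph{available} tenant for $i$ at that stage, i.e.\ $i$ cannot obtain a strictly more preferred tenant from among the agents still present in the graph after round $k-1$.

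Next I would carry out the misreport analysis for a fixed agent $i$. Suppose $i$ misreports $\pref_i'$ while everyone else reports truthfully, and let $j$ be the tenant $i$ receives under the truthful profile and $j'$ the tenant under the misreport. Since $i$'s preference over tenants is strict and is what she predominantly cares about, if $j' \spref_i j$ we must derive a contradiction. The argument: run TTTC on the truthful profile; at the round where $h_i$ is removed, $j$ is the best remaining tenant for $i$, which means every tenant $i$ strictly prefers to $j$ was already assigned (to their own-house-destination) in an earlier cycle. I would then argue, by induction on the rounds, that $i$'s report does not affect which agents/houses leave in those earlier rounds in a way that could free up a better tenant for her — the cycles removed before $h_i$'s round under the truthful profile depend only on the other agents' (truthful) preferences and not on $\pref_i$, because $i$'s house and $i$'s pointer only enter once $h_i$'s cycle is processed. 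Hence under any report, the set of tenants available when $h_i$ would be processed is a subset of what was available truthfully, so $i$ cannot do strictly better in the tenant dimension.

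Finally I would handle the tie-breaking (house) dimension, which is the only subtlety beyond the pure Shapley-Scarf argument. If $i$ gets the \emph{same} best-available tenant $j$ but possibly a different house, I would appeal to the specification of TTTC: in the cycle containing $h_i$, the house $i$ receives is determined by the cycle she lies on, and the construction (each agent points to her own house) pins down $i$'s house-destination as a function of the cycle; because the strict tenant-preferences are used first and only broken by house preferences within an equivalence class, misreporting cannot change which cycle $i$ ends up in without changing her tenant, which we have already excluded. So $i$'s outcome $(\text{house}, \text{tenant})$ is unchanged, and strategyproofness follows.

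The main obstacle I anticipate is making the induction in the second paragraph fully rigorous: one must argue carefully that $i$'s misreport cannot cause an \emph{earlier} cycle to form that captures a tenant $i$ would have preferred, i.e.\ that the rounds preceding the removal of $h_i$ are genuinely independent of $\pref_i$. In the standard housing-market proof this is the familiar claim that TTC is ``independent'' of an agent's report up until her house is assigned; here the same fact holds because a house's pointer is governed by its owner's preference, not by $i$, and $i$'s own pointer is irrelevant until $h_i$ is in an active cycle. I would cite the analogous standard argument for TTC rather than reprove it in full detail, as the paper already signals these proofs are ``very similar.''
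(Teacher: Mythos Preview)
Your approach is correct and matches the paper's: the paper gives no separate proof for this proposition, merely noting that ``the arguments \ldots\ are very similar to those of TTC for house-predominant preferences,'' and for that case it offers only the two-line sketch that an agent cannot get a better house and, if she gets the same house, the cycle (hence her tenant) is fixed. Your dualisation---an agent cannot get a strictly better tenant because earlier cycles are independent of $h_i$'s pointer, and if the tenant is unchanged then the cycle through $i$ (and hence her house) is unchanged---is exactly that argument with the roles swapped, and your identification of the order-independence of earlier rounds as the point needing care is apt.
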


						    The same results also hold for ``tenant-only preferences'' in which agents have strict preferences over tenants and do not care about which house they get.

\section{Conclusions}

We considered a natural generalization of the Shapley-Scarf housing market with ordinal preferences. Several positive axiomatic and computational results that hold for the Shapley-Scarf housing market no longer hold for the temporary exchange market. On the other hand, we present some positive algorithmic and axiomatic results when preferences have more structure. 

The problem can be extended in several ways. Typically when exchanging holiday homes, it may be the case that the exact duration of holidays may not coincide for the people in the market. An extended model would allow for time windows and having back to back bookings. 

We presented an algorithm to compute an RS-PO allocation. One can also consider the core. A coalition $S\subseteq N$ \emph{RS-blocks} an allocation $p$ on $N$ if there exists an allocation $q$ on $S$ such that for all $i\in S$, it is the case that $q(i)\in \w(S)$ and $q(i)\succ_i^{RS} p(i)$. An allocation is \emph{RS core stable} if it admits no RS blocking coalition. 
If all preferences over houses are strict, then TTC returns an RS core stable outcome and if all preferences over agents are strict, then TTTC returns an RS core stable outcome.
For weak preferences, 
the following problems appear to be interesting. Does an RS core stable allocation always exist? What is the complexity of computing such an allocation?

         \bibliographystyle{named}
         



\end{document}